%
%
%
%
%
%
%
\documentclass[%
 aps, physrev,
]{revtex4-2}

\usepackage{graphicx, amsmath, amssymb, amsfonts, amsthm}
\usepackage{dcolumn}
\usepackage{bm}


\newtheorem{theorem}{Theorem}
\newtheorem{lemma}{Lemma}

\theoremstyle{definition}

\newcommand{\outpro}[2]{\vert #1\rangle\langle #2\vert}

\newcommand{\ket}[1]{\vert #1\rangle}

\newcommand{\tr}[1]{\mathsf{Tr}(#1)}

\begin{document}


\title{\textbf{Local Operations in Multiparty Quantum Systems} 
}%

\author{Mithilesh Kumar}
 \email{Contact author: admin@drmithileshkumar.com}
 \homepage{https://drmithileshkumar.com}



\begin{abstract}
In a multipartite systems, local operations are conducted by one party and the results are communicated to the other parties. Such models have been studied under the framework of LOCC and SLOCC. In this paper, we study when can an action of one party be simulated by another. We obtain necessary and sufficient conditions for when can a unitary action be simulated in a bipartite system. We also show that arbitrary operations can be simulated by any party as long as the given multipartite state is Schmidt decomposable. Moreover, we obtain condition for simulation of local measurements in arbitrary tripartite systems.
\end{abstract}

\maketitle

Distributed quantum systems can be seen as multipartite states. Since the early days of quantum information, communication between parties has been studied. Each party does local quantum operation on its part of the system and communicates the result with other parties. Lo and Popescu \cite{Lo} showed that two-way communication between two parties can be reduced to one-way communication. This observation was later used by Nielsen \cite{NielsenLOCC} to obtain the necessary and sufficient condition for conversion of one bipartite state to another via local operations and classical communication. 

The focus of this work is understanding how a local operation done by one party can be simulated by another party. Such a question becomes relevant even in circuit design where operations can be localized to a given set of qubits.

Let us begin with a bipartite system where Alice has one part and Bob has another. Suppose Bob performs local unitary operation \(U_B\). Does there exist local unitary operation \(U_A\) that Alice can perform such that the resulting state is the same? As the following theorem shows, it is not always possible to achieve it. \(U_B\) depends on the input state \(\ket{\psi}\).
\begin{theorem}
    Given a bipartite state \(\ket{\psi}\), the local unitary operation \(U_B\) of Bob can be simulated by a local unitary operation \(U_A\) by Alice if and only if \(U_A = U_B^T\) and \(U_B = U_1\oplus U_2\oplus\cdots \oplus U_d\) where unitary operations \(U_i\) have dimension equal to the multiplicity of the \(i\)th eigenvalue \(\mu_i\) of \(\rho_B\). 
\end{theorem}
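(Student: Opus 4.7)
My plan is to encode the simulation requirement as a single matrix identity via the Schmidt decomposition. Writing
\begin{equation*}
\ket\psi \;=\; \sum_{i} \sqrt{\mu_i}\,\ket{a_i}_A\ket{b_i}_B,
\end{equation*}
$\rho_B$ is diagonal with entries $\mu_i$ in the Schmidt basis, and the coefficient matrix of $\ket\psi$ is $M = \mathrm{diag}(\sqrt{\mu_i})$. A routine calculation then shows that $(U_A\otimes U_B)\ket\psi$ corresponds to the matrix $U_A M U_B^T$, so I can rewrite the simulation condition $(U_A\otimes I)\ket\psi = (I\otimes U_B)\ket\psi$ as the single matrix equation $U_A M = M U_B^T$.

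For the \emph{sufficiency} direction I will verify directly that if $U_B=U_1\oplus\cdots\oplus U_d$ respects the $\mu_i$-eigenspace decomposition of $\rho_B$, then on each block $M$ acts as the scalar $\sqrt{\mu_i}$ and therefore commutes with $U_B$ and with $U_B^T$; the choice $U_A = U_B^T$ then satisfies $U_A M = M U_B^T$ block by block.

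The \emph{necessity} direction will be the substantive part. Taking the adjoint of $U_A M = M U_B^T$ and multiplying the two equations together will give
\begin{equation*}
M^\dagger M \;=\; (M^\dagger U_A^\dagger)(U_A M) \;=\; U_B^{\ast}\, M^\dagger M\, U_B^T,
\end{equation*}
and since in the Schmidt basis $M^\dagger M$ is just the real diagonal matrix $\rho_B$, this forces $U_B^T$ (equivalently $U_B$) to commute with $\rho_B$. The only way a unitary can commute with a diagonal matrix is to be block-diagonal with one block per distinct eigenvalue, the block dimension equalling its multiplicity --- precisely the claimed structure. Substituting back into $U_A M = M U_B^T$ and cancelling the scalar $\sqrt{\mu_i}$ on each block will then yield $U_A = U_B^T$.

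The one point that I expect to need care is the rank-deficient case, where some $\mu_i$ vanishes: on that kernel $M$ is zero, the matrix equation is vacuous, and $U_B$ (respectively $U_A$) is free to be any unitary on the kernel of $\rho_B$ (respectively $\rho_A$). This is harmless for the stated theorem provided one reads $\mu=0$ as a legitimate eigenvalue of $\rho_B$ contributing its own multiplicity block.
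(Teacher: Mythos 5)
Your proposal is correct and takes essentially the same route as the paper: work in the Schmidt basis, reduce the simulation condition to the intertwining relation \(U_A D = D U_B^T\) with \(D=\mathrm{diag}(\sqrt{\mu_i})\), and use unitarity to force \(U_B\) to commute with \(\rho_B\), giving the block structure and \(U_A=U_B^T\); your vectorized identity \(U_A M = M U_B^T\) is just a tidier packaging of the paper's coefficient-by-coefficient comparison. Your closing remark on the rank-deficient case (where \(U_A=U_B^T\) only holds on the support of \(\rho_B\)) is a correct refinement that the paper passes over silently.
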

\begin{proof}
    Consider the state \(\ket{\psi}\) in the Schmidt basis
    \begin{align*}
        \ket{\psi} &= \sum_\ell\lambda_\ell \ket{\ell_A}\ket{\ell_B}
    \end{align*}
    The state after Bob applies his local unitary operation \(U_B\), written in the Schmidt basis as \begin{align*}
        U_B = \sum_{n,\ell} u^B_{n\ell}\outpro{n_B}{\ell_B}
    \end{align*}
    is given by
    \begin{align*}
        \ket{\phi} &= \sum_\ell \lambda_\ell \ket{\ell_A}U_B\ket{\ell_B}\\
        &= \sum_\ell \lambda_\ell \ket{\ell_A}\sum_n u^B_{\ell n}\ket{n_B}\\
        &= \sum_{\ell,n} \lambda_\ell u^B_{\ell n} \ket{\ell_A}\ket{n_B}\\
        &= \sum_{ij} \lambda_i u^B_{ij}\ket{i_A}\ket{j_B}
    \end{align*}
    where the last expression is just relabeling \(\ell\) to \(i\) and \(n\) to \(j\). Suppose there exists a local unitary operation \(U_A\) for Alice that results in the same state \(\ket{\phi}\) when acted up on \(\ket{\psi}\). Writing \(U_A\) is the Schmidt basis we get
    \begin{align*}
        U_A &= \sum_{m\ell}u^A_{m\ell}\outpro{m_A}{\ell_A}
    \end{align*}
    Applying \(U_A\) on \(\ket{\psi}\) gives
    \begin{align*}
        \ket{\phi} &= \sum_\ell \lambda_\ell U_A\ket{\ell_A}\ket{\ell_B}\\
        &= \sum_\ell \lambda_\ell \sum_m u^A_{\ell m}\ket{m_A}\ket{\ell_B}\\
        &= \sum_{m,\ell} \lambda_\ell u^A_{\ell m} \ket{m_A}\ket{\ell_B}\\
        &= \sum_{ij} \lambda_j u^A_{ji}\ket{i_A}\ket{j_B}
    \end{align*}
    These to separate expressions for \(\ket{\phi}\) imply that for each \(i,j\)
    \begin{align*}
        \lambda_i u^B_{ij} &= \lambda_j u^A_{ji}\\
        v^A_{ij}\lambda_j &= \lambda_i u^B_{ij}
    \end{align*}
    where \(v^A_ij = u^A_{ji}\) is unitary as well. We can define a diagonal matrix \(D\) such that the diagonals are the coefficients \(\lambda_i\). This provides us with
    \begin{align*}
        V_AD &= DU_B\\
        V_A &= DU_BD^{-1}
    \end{align*}
    Now we need to apply the condition that \(V_A\) must be unitary.
    \begin{align*}
        V_A^\dagger V_A &= I_A\\
        \implies (D^{-1}U_B^\dagger D)(DU_B D^{-1}) &= I_A\\
        \implies U_B^\dagger D^2 U_B &= D^2\\
        \implies D^2 U_B &= U_B D^2\\
        \implies \lambda_i^2 u^B_{ij} &= \lambda_j u^B_{ij}\\
        \implies (\lambda_i^2 - \lambda_j^2) u^B_{ij} &= 0
    \end{align*}
    Suppose that \(\lambda_i\) are sorted in decreasing order such that there are blocks of equal values \((\mu_1, ...,\mu_1), (\mu_2,...,\mu_2)...,\)\((\mu_d, ..., \mu_d)\).
    The set of indices for which \(\lambda_i\neq \lambda_j\), we must have \(u^B_{ij} = 0\). For set of indices such that \(\lambda_i = \lambda_j\), the square sub-matrix formed by taking the corresponding columns and rows is unitary, that is, \(U_B\) must be a block diagonal where each block is unitary.
    \begin{align*}
        U_B = U_1\oplus U_2\oplus\cdots \oplus U_d
    \end{align*}
    where the dimension of \(U_i\) is equal to multiplicity of \(\mu_i\). It is easy to verify that \(U_A = U_B^T\).
\end{proof}
It is clear from the above proof why similar ideas worked for measurements in bipartite states, but not for unitary operations. Measurements bring a lot more freedom. Next we show that the proof technique of Lo and Popescu \cite{Lo} can be extended to Schmidt decomposable multipartite systems \cite{Schmidt1907, PERES199516, Ac_n_2000, kumarMultipartite}.

\begin{theorem}
    In a Schmidt decomposable multipartite state, any local quantum operation done by one party can be simulated by any party at the cost of some local unitary operations.
\end{theorem}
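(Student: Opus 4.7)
The plan is to reduce the statement to reproducing a single Kraus operator $M$ acting on one party, and then to lift the bipartite Lo--Popescu argument to the multipartite setting by exploiting the product form of a Schmidt decomposable state.

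First I would observe that any local CPTP map on party $A_k$ admits a Kraus decomposition, so by linearity it suffices to show that for every linear $M$, the unnormalised post-operation state
$$\ket{\phi} = \sum_\ell \lambda_\ell \ket{\ell_{A_1}}\cdots(M\ket{\ell_{A_k}})\cdots\ket{\ell_{A_n}}$$
can be written as a single operator $M'$ applied on a chosen party $A_j$, together with local unitaries $U_i$ on the remaining parties that do not depend on the measurement outcome.

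Next I would view $\ket{\psi}$ bipartitely across the cut $A_j$ versus the rest. Because the state is Schmidt decomposable, the rest factor has the product form $\ket{\Phi_\ell}=\bigotimes_{i\neq j}\ket{\ell_{A_i}}$ and the Schmidt coefficients across this cut are the original $\lambda_\ell$. Applying the coefficient-matching argument used in the previous theorem to this bipartition yields a candidate operator $M'$ on $A_j$ whose matrix elements in the Schmidt basis are essentially the transpose of those of $M$. The action of $M'$ on $A_j$ applied to $\ket{\psi}$ then agrees with $\ket{\phi}$ up to a unitary rearrangement $U_{\text{rest}}$ acting on the rest factor.

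The crucial step is to argue that $U_{\text{rest}}$ is a tensor product of local unitaries on the individual remaining parties $A_i$. This is where Schmidt decomposability is essential: because the Schmidt labels of the different parties are perfectly correlated in $\ket{\Phi_\ell}$, any permutation or phase on the shared index is implemented by performing the same operation on the Schmidt basis of every party $A_i$ with $i\neq j$. To handle degeneracies in $\{\lambda_\ell\}$ I would invoke the block-diagonal argument from the first theorem, applied party-by-party, to force the residual unitaries to take a product form compatible with the Schmidt structure. The hard part will be making this last point precise: ruling out the possibility that the bipartite Lo--Popescu construction produces a $U_{\text{rest}}$ that acts in an entangling way within a degenerate Schmidt block on the rest factor, which would obstruct a product decomposition. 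I expect this to be resolved by using the gauge freedom inside each degenerate block of the Schmidt spectrum, which on the rest factor can always be represented as the same unitary applied simultaneously to each party's Schmidt basis and hence absorbed one party at a time.
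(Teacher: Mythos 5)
You have taken essentially the same route as the paper: write the state in its shared-index Schmidt form, run the bipartite Lo--Popescu coefficient-matching across the cut $A_j$ versus the rest to obtain a transposed operator $M'$ on $A_j$, and then try to repair the mismatch with unitaries. To your credit, you have isolated exactly the step that carries all the weight --- that the residual correction $U_{\mathrm{rest}}$ delivered by the bipartite argument can be chosen as a tensor product of local unitaries on the individual parties $A_i$, $i\neq j$ --- whereas the paper simply asserts it (its final maps $U_A:\ket{k_A}\to\ket{\ell_A}$ and $U_B:\ket{\ell_B}\to\ket{k_B}$ depend on both summation indices and are not well-defined local unitaries). But you leave that step as an expectation about gauge freedom inside degenerate Schmidt blocks, and that is a genuine gap, not a loose end.

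In fact the step cannot be closed for non-unitary Kraus operators. Take the GHZ state $\ket{\psi}=\tfrac{1}{\sqrt2}(\ket{000}+\ket{111})$ and let Bob apply $M=\outpro{0}{+}$. His normalised outcome is $\tfrac{1}{\sqrt2}(\ket{0_A0_C}+\ket{1_A1_C})\otimes\ket{0_B}$, with Schmidt rank $1$ across $B|AC$ and rank $2$ across $A|BC$. For any operator $L$ on Alice's side, $(L\otimes I\otimes I)\ket{\psi}=\tfrac{1}{\sqrt2}\bigl(L\ket{0}\otimes\ket{00}_{BC}+L\ket{1}\otimes\ket{11}_{BC}\bigr)$ has rank across $B|AC$ equal to the number of nonzero vectors among $L\ket{0},L\ket{1}$, and forcing that to be $1$ forces the rank across $A|BC$ down to $1$. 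Since bipartition ranks are invariant under (outcome-dependent) local unitaries, no choice of $L$, $U_A$, $U_B$, $U_C$ reproduces Bob's outcome: the two states are not locally unitarily equivalent at all, so there is no $U_{\mathrm{rest}}$ --- product or otherwise --- left to factorize. Note that the GHZ spectrum is fully degenerate, so the block/gauge argument you hoped to borrow from the first theorem is unavailable precisely in the most symmetric case. The obstruction is structural: after coefficient matching, the shared Schmidt label sits on all parties except $A_j$ in the simulated state but on all parties except $A_k$ in the target, and these two correlation patterns have different local-unitary invariants whenever a third party is present and $M$ is not proportional to a unitary. What survives of your plan is the unitary case and the strictly bipartite case; for genuine measurements on a state entangled with more than one other party, the claimed simulation requires additional hypotheses.
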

\begin{proof}
    The proof is in line with one given by Lo and Popescu \cite{Lo}. For simplicity, we consider the case of tripartite state. Start with the Schmidt decomposition of \(\ket{\psi}\).
    \begin{align*}
        \ket{\psi} = \sum_\ell \lambda_\ell \ket{\ell_A}\ket{\ell_B}\ket{\ell_C}
    \end{align*}
    Let the measurement operators of Alice and Bob are represented by \(\{A_j\}\) and \(\{B_j\}\) respectively. Writing these operators in Schmidt bases of their respective spaces gives
    \begin{align*}
        A_j &= \sum_{k\ell}A^j_{k\ell}\outpro{k_A}{\ell_A}\\
        B_j &= \sum_{k\ell}B^j_{k\ell}\outpro{k_B}{\ell_B}
    \end{align*}
    The action of these measurements of \(\ket{\psi}\) will give
    \begin{align*}
        A_j\ket{\psi} &= a \sum_{kl}\lambda_\ell A^j_{k\ell}\ket{k_A}\ket{\ell_B}\ket{\ell_C}\\
        B_j\ket{\psi} &= b \sum_{kl}\lambda_\ell B^j_{k\ell}\ket{\ell_A}\ket{k_B}\ket{\ell_C}
    \end{align*}
    where \(a\) and \(b\) are normalization constants given by 
    \begin{align*}
        a &= \sum_{k\ell}\lambda^2_\ell |A^j_{k\ell}|^2\\
        b &= \sum_{k\ell}\lambda^2_\ell |B^j_{k\ell}|^2
    \end{align*}
    If Bob initiates measurement and Alice tries to simulate, then we can define
    \begin{align*}
        A^j_{k\ell} = B^j_{k\ell}
    \end{align*}
    After the measurement Alice applies the unitary operation \(U_A: \ket{k_A}\to \ket{\ell_A}\) and Bob applies \(U_B: \ket{\ell_B}\to \ket{k_B}\). Other parties do nothing.
\end{proof}
Schimdt decomposition gives a lot of advantage in dealing with local operations. Next, we express any tripartite state in the Schmidt bases obtained via bipartitions of the tripartite state \(\ket{\psi}\).
\begin{theorem}
    Any tripartite state \(\ket{\psi}\) can be decomposed as 
    \begin{align}
        \ket{\psi} = \sum_{\ell,m,n}a_{\ell mn}\ket{\ell_A}\ket{m_B}\ket{n_C}
    \end{align}
    where \(\ket{\ell_A}, \ket{m_B},\ket{n_C}\) are Schmidt vectors in the Schmidt decomposition of bipartitions \(A-BC, B-AC\) and \(AB-C\) respectively.
\end{theorem}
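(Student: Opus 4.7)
The plan is to verify that $\ket{\psi}$ lies inside the tensor product of the three subspaces spanned by the local Schmidt vectors coming from the three bipartitions, and then expand it in the resulting orthonormal product basis.

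First, I would apply the Schmidt theorem separately to each bipartition. The $A$-$BC$ cut yields $\ket{\psi}=\sum_\ell \alpha_\ell \ket{\ell_A}\ket{\tilde{\ell}_{BC}}$, so the Schmidt vectors $\{\ket{\ell_A}\}$ form an orthonormal set spanning some subspace $V_A \subseteq \mathcal{H}_A$, and the decomposition immediately gives $\ket{\psi}\in V_A\otimes \mathcal{H}_B\otimes \mathcal{H}_C$. Analogous decompositions for the $B$-$AC$ and $AB$-$C$ cuts produce orthonormal sets $\{\ket{m_B}\}$ and $\{\ket{n_C}\}$ spanning subspaces $V_B\subseteq \mathcal{H}_B$ and $V_C\subseteq \mathcal{H}_C$, together with the containments $\ket{\psi}\in \mathcal{H}_A\otimes V_B\otimes \mathcal{H}_C$ and $\ket{\psi}\in \mathcal{H}_A\otimes \mathcal{H}_B\otimes V_C$.

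Next I would combine the three containments into the single statement $\ket{\psi}\in V_A\otimes V_B\otimes V_C$. This is a short linear-algebra argument: extend each Schmidt set to a full orthonormal basis of its local Hilbert space and expand any vector lying in the intersection in the resulting product basis; each containment forces the coefficients indexed outside the corresponding Schmidt range to vanish, leaving only terms whose three indices all lie inside the respective Schmidt ranges. Since $V_A\otimes V_B\otimes V_C$ has the orthonormal basis $\{\ket{\ell_A}\ket{m_B}\ket{n_C}\}$, it follows that $\ket{\psi}$ admits the stated expansion, with coefficients obtained as the inner products of $\ket{\psi}$ against these product basis vectors.

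The main conceptual point, and the only place where care is needed, is that the Schmidt vectors in each bipartition span only the image of the corresponding reduced density matrix, which can be a proper subspace of the full local Hilbert space. Far from being an obstacle, this fact is precisely what confines $\ket{\psi}$ to the smaller product subspace $V_A\otimes V_B\otimes V_C$ and makes the expansion work using only the genuine Schmidt vectors, without any ad hoc extension. Because only existence of the decomposition is claimed (not uniqueness of the coefficients $a_{\ell mn}$ or any algebraic relation among them), no further work is required beyond the intersection argument.
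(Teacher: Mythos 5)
Your proof is correct, and it takes a somewhat different — and more careful — route than the paper's. The paper fixes the $A$-$BC$ Schmidt decomposition $\ket{\psi}=\sum_\ell\alpha_\ell\ket{\ell_A}\ket{\ell_{BC}}$ and then asserts that each $\ket{\ell_{BC}}$ can be expanded in the products $\ket{m_B}\ket{n_C}$ of the $B$- and $C$-side Schmidt vectors, justifying this only by the dimension count $r_A\le r_Br_C$ and the remark that the two orthonormal sets can be extended to full bases related by a unitary. That remark by itself does not show $\mathrm{span}\{\ket{\ell_{BC}}\}\subseteq V_B\otimes V_C$ (extending $\{\ket{m_Bn_C}\}$ to a full basis only lets you expand $\ket{\ell_{BC}}$ over the \emph{extended} basis); what is actually needed is the support containment you establish, namely that $\ket{\psi}$, and hence the span of the $\ket{\ell_{BC}}$, lives inside $\mathrm{supp}(\rho_B)\otimes\mathrm{supp}(\rho_C)$. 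Your argument — reading off the three containments $\ket{\psi}\in V_A\otimes\mathcal{H}_B\otimes\mathcal{H}_C$, $\ket{\psi}\in\mathcal{H}_A\otimes V_B\otimes\mathcal{H}_C$, $\ket{\psi}\in\mathcal{H}_A\otimes\mathcal{H}_B\otimes V_C$ from the three Schmidt decompositions, and then killing all coefficients outside the Schmidt ranges in an extended product basis to conclude $\ket{\psi}\in V_A\otimes V_B\otimes V_C$ — is symmetric in the three parties and supplies exactly the justification the paper's key step glosses over. The paper's version is shorter on the page; yours buys rigor at the one point where care is genuinely required, so it is the preferable argument.
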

\begin{proof}
    We obtain the Schmidt bases \(\ket{\ell_A}, \ket{m_B}\) and \(\ket{n_C}\) by considering Schmidt decompositions of bipartitions \(A-BC, B-AC\) and \(AB-C\) respectively.
    \begin{align*}
        \ket{\psi} &= \sum_\ell^{r_A} \alpha_\ell \ket{\ell_A}\ket{\ell_{BC}}\\
                   &= \sum_m^{r_B} \beta_m \ket{m_B}\ket{m_{AC}}\\
                   &= \sum_n^{r_C} \gamma_n \ket{n_{AB}}\ket{n_C}
    \end{align*}
    Without loss of generality assume that \(r_A\leq r_Br_C\). Let \(V_{BC} = span(\{\ket{\ell_{BC}}\})\). Then \(dim(V_{BC}) = r_A\) is a subspace of \(\mathcal{H}_B\otimes \mathcal{H}_C\). Similarly \(span(\{\ket{m_Bn_C}\})\) is a subspace of dimension \(r_Br_C\). Since these basis sets are orthonormal, they can be extended to full basis and can be related via a unitary transformation. This implies that the states \(\ket{\ell_{BC}}\) can be written in the linear combination of states \(\ket{m_Bn_C}\). This implies that we can write
    \begin{align*}
        \ket{\psi} = \sum_{\ell, m, n}a_{\ell mn}\ket{\ell_A}\ket{m_B}\ket{n_C}
    \end{align*}
\end{proof}
The following lemma relates the reduced density matrices of individual parties with the Schmidt bases.
\begin{lemma}
    If \(\ket{\psi} = \sum_{\ell mn}a_{\ell mn}\ket{\ell_A}\ket{m_B}\ket{n_C}\) where \(\ket{\ell_A}, \ket{m_B}\) and \(\ket{n_C}\) are Schmidt bases, then the following holds
    \begin{enumerate}
        \item \(\rho_A = \sum_\ell \alpha_\ell^2 \outpro{\ell_A}{\ell_A}\)
        \item \(AA^\dagger = D_A\) where \(A_{\ell, mn} = a_{\ell mn}\) and \(D_A\) is diagonal with coefficients \(\alpha_\ell^2\)
    \end{enumerate}
    Similar relations hold for other bipartitions.
\end{lemma}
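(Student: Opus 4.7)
The plan is to compute $\rho_A$ in two different ways and compare. For part (1), I would start from the $A$-$BC$ Schmidt decomposition $\ket{\psi} = \sum_\ell \alpha_\ell\ket{\ell_A}\ket{\ell_{BC}}$ given in the previous theorem and simply trace out $BC$. Since $\{\ket{\ell_{BC}}\}$ is orthonormal, the cross terms vanish and we immediately get $\rho_A = \sum_\ell \alpha_\ell^2 \outpro{\ell_A}{\ell_A}$.

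For part (2), I would compute $\rho_A$ from the joint expansion $\ket{\psi} = \sum_{\ell m n} a_{\ell m n}\ket{\ell_A}\ket{m_B}\ket{n_C}$ by tracing out $BC$ against the orthonormal product basis $\{\ket{m_B}\ket{n_C}\}$. The result is
\begin{align*}
\rho_A &= \sum_{\ell,\ell'}\Big(\sum_{m,n} a_{\ell m n}\,a^{*}_{\ell' m n}\Big)\outpro{\ell_A}{\ell'_A}.
\end{align*}
Reading the tensor $a_{\ell m n}$ as a matrix $A$ with row index $\ell$ and composite column index $(m,n)$, the bracketed quantity is exactly $(AA^\dagger)_{\ell\ell'}$, so $\rho_A$ has matrix $AA^\dagger$ in the basis $\{\ket{\ell_A}\}$.

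Comparing with part (1), the matrix of $\rho_A$ in $\{\ket{\ell_A}\}$ is the diagonal matrix $D_A$ with entries $\alpha_\ell^2$; since two operators with the same matrix in the same basis are equal, this forces $AA^\dagger = D_A$. The analogous statements for $B$ and $C$ follow by relabeling: flatten the tensor into $B_{m,\ell n}$ and $C_{n,\ell m}$ and repeat, using the Schmidt decompositions for the $B$-$AC$ and $AB$-$C$ bipartitions respectively.

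I do not anticipate any real obstacle here; the content of the lemma is essentially bookkeeping. The one thing I would be careful about is making the index-flattening explicit (that is, the identification of $\mathcal{H}_B\otimes\mathcal{H}_C$ with the column-index space of $A$), since a sloppy treatment could obscure why $\sum_{mn}a_{\ell mn}a^{*}_{\ell' mn}$ is literally $(AA^\dagger)_{\ell\ell'}$ rather than some other contraction. Once that identification is in place, the equality of the two expressions for $\rho_A$ gives both claims in a single step.
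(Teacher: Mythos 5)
Your proposal is correct and follows essentially the same route as the paper: compute \(\rho_A\) once from the \(A\)-\(BC\) Schmidt decomposition to get the diagonal form, and once from the joint expansion to get \(AA^\dagger\), then equate the two expressions in the common basis \(\{\ket{\ell_A}\}\). Your explicit writing-out of the contraction \(\sum_{mn}a_{\ell mn}a^{*}_{\ell' mn} = (AA^\dagger)_{\ell\ell'}\) merely fills in a step the paper leaves as ``easy to verify.''
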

\begin{proof}
    Considering the Schmidt decomposition of bipartition \(A-BC\) we get
    \begin{align*}
        \ket{\psi} &= \sum_\ell \alpha_\ell \ket{\ell_A}\ket{\ell_{BC}}
    \end{align*}
    Tracing out \(\ket{\ell_{BC}}\) gives
    \begin{align}\label{eq:rhoa}
        \rho_A &= \sum_{\ell}\alpha_\ell^2\outpro{\ell_A}{\ell_A}
    \end{align}
    Starting out with \(\ket{\psi} = \sum_{\ell mn}a_{\ell mn}\ket{\ell_A}\ket{m_B}\ket{n_C}\), it is easy to verify that the matrix representation is given by 
    \begin{align}\label{eq:rhoaa}
        \rho_A &= AA^\dagger
    \end{align}
    where the matrix \(A\) is obtained such that \([A]_{\ell, mn} = a_{\ell mn}\). Since we have used same basis in Equation \ref{eq:rhoa} and Equation \ref{eq:rhoaa}, we must have that 
    \begin{align*}
        AA^\dagger = D_A
    \end{align*}
    where \(D_A\) is diagonal with diagonal entries given by \(\alpha_\ell^2\).
\end{proof}
Now we consider the following situation for tripartite systems of Alice, Bob and Cat. Bob makes a measurement \(M_j\) on his system and communicates the results to Alice and Cat. Can Alice make a measurement \(L_j\) instead such that up to local unitary operations, we end up with the same state, i.e. Alice simulates operations of Bob?
\begin{theorem}
    For a tripartite system, given a set of complete measurement operators \(\{M_j\}\) of Bob, Alice can simulate \(\{M_j\}\) if there exists complete measurement operators \(\{L_j\}\) such that 
    \begin{align}
        fL_j = h[M_j A^{T_{AB}}]^{T_{AB}}A^\dagger D_A^{-1}
    \end{align}
    where \begin{itemize}
        \item the matrix \(A = [a_{\ell, mn}]\)
        \item the matrix \(D_A\) is diagonal matrix of eigenvalues of \(\rho_A\)
        \item the matrix \(A^{T_{AB}}\) is obtained by taking partial transpose of indices for Alice and Bob
        \item \(h^{-2} = \sum_{km}\beta_m^2|M^j_{km}|^2\) is normalization factor after taking measurement \(M_j\) where \(\beta_m^2\) are eigenvalues of \(\rho_B\)
        \item \(f^{-2} =  \sum_{d\ell}\alpha_\ell^2|L^j_{d\ell}|^2\) is normalization factor after taking measurement \(L_j\) where \(\alpha_\ell^2\) are eigenvalues of \(\rho_A\)
    \end{itemize}
\end{theorem}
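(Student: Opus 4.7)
The plan is to compute the post-measurement states on both sides explicitly in the common tripartite basis $\{\ket{\ell_A}\ket{m_B}\ket{n_C}\}$ of Theorem~3, match the coefficients to extract a linear system relating $L^j$ to $M^j$, and then invert that system using the structural relation $AA^\dagger = D_A$ provided by the preceding lemma. First I would write the state after Bob measures $M_j$ as
\begin{align*}
M_j\ket{\psi} = h \sum_{\ell,k,n}\Bigl(\sum_m M^j_{km}\, a_{\ell m n}\Bigr)\ket{\ell_A}\ket{k_B}\ket{n_C},
\end{align*}
with $h^{-2}=\sum_{km}\beta_m^2|M^j_{km}|^2$ obtained by tracing out $A,C$ and using the $B$-analogue of the lemma (where $\beta_m^2$ are the eigenvalues of $\rho_B$). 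The corresponding candidate for Alice would be
\begin{align*}
L_j\ket{\psi} = f \sum_{d,m,n}\Bigl(\sum_\ell L^j_{d\ell}\, a_{\ell m n}\Bigr)\ket{d_A}\ket{m_B}\ket{n_C},
\end{align*}
with $f^{-2}=\sum_{d\ell}\alpha_\ell^2|L^j_{d\ell}|^2$ derived identically from the $A$-marginal.

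Next, equating the coefficients of each basis vector $\ket{\ell_A}\ket{k_B}\ket{n_C}$ in the two expressions yields the tensor equation
\begin{align*}
h\sum_m M^j_{km}\, a_{\ell m n} \;=\; f\sum_{\ell'} L^j_{\ell \ell'}\, a_{\ell' k n},
\end{align*}
which is the fundamental simulation constraint. To solve for $L^j$, I would multiply both sides by $\overline{a_{\ell'' k n}}$ and sum over $k,n$. The right-hand side collapses to $f\sum_{\ell'}L^j_{\ell\ell'}[AA^\dagger]_{\ell'\ell''} = f L^j_{\ell\ell''}\alpha_{\ell''}^2$ by the lemma, so that
\begin{align*}
f L^j_{\ell\ell''} \;=\; h\,\alpha_{\ell''}^{-2}\sum_{k,m,n} M^j_{km}\, a_{\ell m n}\,\overline{a_{\ell'' k n}}.
\end{align*}

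The final step is to repackage this index expression as a matrix identity. Viewing $A$ as the $r_A\times r_Br_C$ matrix with row $\ell$ and column $(m,n)$, the partial transpose $A^{T_{AB}}$ becomes $r_B\times r_Ar_C$ with row $m$ and column $(\ell,n)$, which is precisely the arrangement needed so that $M_j$ can be applied on the left; a second $T_{AB}$ brings Alice's index back to the row slot. Right-multiplying by $A^\dagger D_A^{-1}$ contracts the remaining $(k,n)$ pair and supplies the factor $\alpha_{\ell''}^{-2}$, reproducing exactly $fL_j = h[M_jA^{T_{AB}}]^{T_{AB}}A^\dagger D_A^{-1}$.

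The main obstacle will be the partial-transpose bookkeeping: one must verify that the row/column conventions for $A^{T_{AB}}$ make each of the three successive matrix products conformable, and that the implicit identification of Bob's ``input'' and ``output'' indices is consistent throughout. A secondary subtlety is the treatment of Schmidt coefficients $\alpha_{\ell''}=0$, where $D_A^{-1}$ must be interpreted as a pseudo-inverse on $\mathrm{supp}(\rho_A)$; on the kernel, the original coefficient constraint is automatically satisfied since $a_{\ell'' mn}=0$ there, so no information is lost.
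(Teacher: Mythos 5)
Your proposal is correct and follows essentially the same route as the paper: expand both post-measurement states in the common Schmidt bases, equate coefficients after the implicit relabeling, and contract with $A^\dagger$ (your index-wise sum against $\overline{a_{\ell'' kn}}$ is exactly the paper's right-multiplication by $A^\dagger$), invoking $AA^\dagger = D_A$ from the lemma to isolate $L_j$, with the same trace computation for $f$ and $h$. Your remarks on the partial-transpose bookkeeping and on interpreting $D_A^{-1}$ as a pseudo-inverse when some $\alpha_{\ell}$ vanish are sensible refinements but do not change the argument.
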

\begin{proof}
    We start by expressing \(\ket{\psi}\) in Schmidt bases of each subsystem 
    \begin{align*}
        \ket{\psi} = \sum_{\ell, m,n} a_{\ell mn}\ket{\ell_A}\ket{m_B}\ket{n_C}
    \end{align*}
    The measurement operators are also expressed in these bases as
    \begin{align*}
        L_j &= \sum_{d\ell}L^j_{d\ell}\outpro{d_A}{\ell_A}\\
        M_J &= \sum_{km}M^j_{km}\outpro{k_B}{m_B}
    \end{align*}
    Applying these measurements to \(\ket{\psi}\), we get
    \begin{align*}
        L_j\ket{\psi} &= f \sum_{mnd}(\sum_\ell L^j_{d\ell} a_{\ell mn})\ket{d_Am_Bn_C}\\
        M_j\ket{\psi} &= h \sum_{\ell kn}(\sum_m M^j_{km} a_{\ell mn})\ket{\ell_A k_Bn_C}
    \end{align*}
    where \(f\) and \(h\) are normalization constants. Relabeling \(d \to \ell\) and \(m\to k\) we can rewrite \(L_j\ket{\psi}\) as 
    \begin{align*}
        L_j\ket{\psi} &= f \sum_{\ell k n}(\sum_p L^j_{\ell p} a_{p kn})\ket{\ell_Ak_Bn_C}
    \end{align*}
    Equating coefficients implies
    \begin{align}\label{eq:measure}
        f\sum_{p}L^j_{\ell p} a_{p\ell n} &= h\sum_m M^j_{km}a_{\ell mn}\\
        f[L_j A]_{\ell, kn} &= h[M_j B]_{k,\ell n}
    \end{align}
    where matrix \(A\) is of dimension \(r_A\times r_Br_C\) and is obtained by considering 
    \begin{align*}
        [A]_{\ell, mn} = a_{\ell mn}
    \end{align*}
    The matrix \(B\) is of dimension \(r_B\times r_Ar_C\) and is obtained by considering
    \begin{align*}
        [B]_{m,\ell n} = a_{\ell mn}
    \end{align*}
    That is, \(B\) can be obtained by taking partial transpose of \(A\) by swapping indices \(\ell \leftrightarrow m\)
    \begin{align*}
        B = A^{T_{AB}}
    \end{align*}
    Returning back to Equation \ref{eq:measure}, we can write
    \begin{align*}
        fL_j A &= h (M_j B)^{T_{AB}}\\
        fL_jAA^\dagger &= h (M_j B)^{T_{AB}}A^\dagger\\
        fL_jD_A &= h (M_j B)^{T_{AB}}A^\dagger\\
        fL_j &= h (M_j B)^{T_{AB}}A^\dagger D_A^{-1}\\
             &= h (M_j A^{T_{AB}})^{T_{AB}}A^\dagger D_A^{-1}
    \end{align*}
    where \(D_A\) is the matrix of diagonal of eigenvalues of \(\rho_A\).

    All that remains now is to obtain the expressions for the normalization factors \(f\) and \(h\). As seen above, the coefficients in \(L_j\ket{\psi}\) are given by \(f[L_jA]_{\ell, kn}\). Using the normalization condition, we get
    \begin{align*}
        \sum_{\ell mn}f^2 [L_jA]_{\ell, kn}[L_jA]^*_{\ell, kn} &= 1\\
        f^2 \sum_{\ell mn} [L_jA]_{\ell, kn} [L_jA]^\dagger_{kn,\ell} &= 1\\
        f^2 \tr{L_jAA^\dagger L_j^\dagger} &= 1\\
        f^2 \tr{L_JD_AL_j^\dagger} &= 1\\
        f^2 \sum_{d\ell}\alpha^2_\ell |L_{d\ell}^2|^2 &= 1
    \end{align*}
    Similarly, we can obtain the normalization constant \(h\) as 
    \begin{align*}
        h^2 \sum_{km}\beta^2_m|M^j_{km}|^2 &= 1
    \end{align*}
\end{proof}

We conclude that in general unitary operations can not be simulated even in bipartite systems, even though it is known that measurements can be simulated (\cite{Lo}). Schmidt decomposable multipartite states allow simulation of arbitrary local operations. We obtain condition for when arbitrary tripartite states allow simulation of measurements.

\bibliography{main}

\end{document}